\let\csname equation*\endcsname\relax
\let\csname endequation*\endcsname\relax
\newtheorem{theorem}{Theorem}[section]
\newtheorem{proposition}[theorem]{Proposition}
\DeclareMathAlphabet{\mathantt}{OT1}{antt}{li}{it}
\DeclareMathAlphabet{\mathpzc}{OT1}{pzc}{m}{it}
\begin{document}

\title[Integrable couplings of a D-KN hierachy]{Integrable couplings of a generalized D-Kaup-Newell hierarchy and their Hamiltonian structures}

 \author{Morgan McAnally}
\address{Department of Mathematics, The University of Tampa, Tampa, FL 33606-1490 USA}
 \ead{mmcanally@ut.edu} 
\author{Wen-Xiu Ma} 
\address{Department of Mathematics and Statistics, University of South Florida, Tampa, FL 33620-5700, USA} 
\address{College of Mathematics and Systems Science, Shandong University of Science and Technology, Qingdao 266590, Shandong, China}
\address{International Institute for Symmetry Analysis and Mathematical Modeling, Department of Mathematical Sciences, North-West University, Mafikeng Campus, Private Bag X2046, Mmabatho 2735, South Africa}
\address{Department of Mathematics, Zhejiang Normal University, Jinhua 321004, Zhejiang, China,
College of Mathematics and Systems Science, Shandong University of Science and Technology, Qingdao 266590, Shandong, PR China}
  \ead{mawx@cas.usf.edu}
\vspace{10pt}
\begin{indented}
\item[]June 2019
\end{indented}

\begin{abstract}
We enlarge the spectral problem of a generalized D-Kaup-Newell (D-KN) spectral problem.  Solving the enlarged zero-curvature equations, we produce integrable couplings. A reduction of the spectral matrix leads to a second integrable coupling system. Next, bilinear forms that are symmetric, ad-invariant, and non-degenerate on the given non-semisimple matrix Lie algebra are computed to employ the variational identity. The variational identity is then applied to the original enlarged spectral problem of a generalized D-KN hierarchy and the reduced problem. Hamiltonian structures are presented, as well as a bi-Hamiltonian formulation of the reduced problem. Both hierarchies have infinitely many commuting symmetries and conserved densities, i.e., are Liouville integrable.
\end{abstract}

%
% Uncomment for keywords
\vspace{2pc}
\noindent{\it Keywords}: Integrable coupling, Matrix spectral problem, Liouville integrable, Hamiltonian structure\\
2010 Mathematics Subject Classification. 37K05, 37K10, 35P30, 37K30\\
%
% Uncomment for Submitted to journal title message
%\submitto{\JPD}
%
% Uncomment if a separate title page is required
\date{\today}

\maketitle

% For two-column output uncomment the next line and choose [10pt] rather than [12pt] in the \documentclass declaration
%\ioptwocol
%

 \section{Introduction}

The finding of new integrable couplings has become an important area of research in mathematical physics \cite{WF}-\cite{ic23}. Originally, integrable couplings were found in the study of centerless Virasoro symmetry algebras of integrable systems \cite{WF,ic2}. Given an integrable system $u_t=K(u)$, an integrable coupling of the system is a triangular system of the form:
\begin{equation}
\label{ts} \bar{u}_t=[u_t, v_t]^T=\bar{K}(\bar{u})=[
K(u), 
S(u,v)]^T
\end{equation}
where potentials $u$ and $v$ are scalar functions or vector functions with dependent variables $\bar{x}=(t,x_1,x_2,...), \bar{u}=[u,v]^T$. It is important that the new differential equations in the bigger system (\ref{ts}) involve the dependent variables $u$ and all of its derivatives, i.e., $\frac{\partial S}{\partial u} \neq 0$. Integrable couplings were first constructed through perturbations \cite{WF,ic2,ic3}, then the spectral matrices were enlarged \cite{ic4,ic5,ic6}, and in 2006, the connection between integrable couplings and semi-direct sums of Lie algebras was realized \cite{ic10,ic11}. Very recently, a novel kind of AKNS integrable couplings was analyzed  \cite{ic23} and other new integrable couplings have been presented  \cite{ic23a, ic23b}. This paper has the enlarge spectral matrix with all submatrices depending on $\lambda$. This technique has only be seen recently. 

A generalized D-KN hierarchy is derived from the following isospectral problem:
\begin{equation}
\label{newhier}
\phi_x=U(\lambda,u) \phi = \begin{bmatrix} \lambda^2 - r_1 & \lambda p_1 + s_1 \\
\lambda q_1 + v_1 & -\lambda^2 + r_1  \end{bmatrix} \phi,     
\end{equation}
where $ u=[p_1 , q_1 , r_1 , s_1 , v_1]^T$ are potentials,
$ \phi=[\phi_1 , \phi_2 ]^T$ and $U \in sl(2, \mathbb{R})$. A hierarchy starting from this spectral problem was analyzed and found to be integrable in the Liouville sense \cite{morgan}. A reduction to the spectral problem \ref{newhier} is bi-Hamiltonian \cite{morgan}. In this paper, we enlarge both the spectral problem \ref{newhier} and its reduction and show their integrability which is a new finding.

 Recall, the D-KN spectral problem is known \cite{7,28} to be
\begin{equation}
\label{dknsp}
\phi_x=U(\lambda, u) \phi = \begin{bmatrix} \lambda^2 +r_1 & \lambda p_1 \\
\lambda q_1 & -\lambda^2- r_1  \end{bmatrix} \phi,  
\end{equation}
which depends on three potentials  $u=[p_1 , q_1,  r_1]^T$ and $\phi$ is the same as (\ref{newhier}) with $U \in sl(2, \mathbb{R})$. The new spectral problem (\ref{newhier}) is a generalization of the D-KN spectral problem  adding two new potentials $s_1$ and $v_1$. Previously, the cases where $r_1=\alpha$ and $r_1=\alpha pq$ have been shown to generate integrable hierarchies \cite{ic26,27} for the D-KN spectral problem (\ref{dknsp}).  It is clear that (\ref{newhier}) is a generalization of the Kaup-Newell \cite{32} spectral matrix, as well. We will note that AKNS hierarchy \cite{33} may be found from (\ref{newhier}) by letting $p_1=q_1=r_1=0$ and choosing a suitable Laurent expansion.  

Two sections complete this paper: integrable couplings and Hamiltonian structures. In the integrable couplings section, an enlarged the spectral matrix (\ref{newhier})is presented. Solving the zero-curvature equation, we find a system of recursive relation and prove they are local. We use this to present the hierarchy of integrable couplings. Next, a reduction of the enlarged spectral matrix leads to a second integrable coupling system. The section of Hamiltonian structures follows where a non-degenerate, ad-invariant, symmetric bilinear forms are found. The bilinear forms along with the variational identity produce Hamiltonian structures of a generalized D-KN integrable couplings and its reduced integrable coupling system. A bi-Hamiltonian structure is found for the reduced integrable couplings. We discover infinitely many commuting symmetries and conserved functionals for both hierarchies and, thus, their Liouville integrability.

\section{Integrable couplings}

In order to simplify notation, we define a triangular block matrix as follows:
\begin{equation}
\label{em}
M(A_1,A_2)=\begin{bmatrix} A_1 & A_2 \\
0 & A_1 \end{bmatrix}.
\end{equation}
It can easily be shown that matrices of this form are closed under matrix multiplication, i.e., constitute a Lie algebra. The associated matrix loop algebra $\tilde{\mathfrak{g}}(\lambda)$ is formed by all block matrices of the type:
\begin{equation}
\label{nla}
\tilde{\mathfrak{g}}(\lambda)=  \{ M(A_1, A_2) |   
 \enspace M \enspace \mbox{defined by} \enspace (\ref{em}), \mbox{entries of $A_i$ are Laurent series in  $\lambda$} \}.
\end{equation}
We will use this notation throughout this paper.
\subsection{Generalized D-KN integrable couplings}
A spectral matrix is chosen from $\tilde{\mathfrak{g}}(\lambda)$ as
\begin{equation}
\label{smic1}
\bar{U}=\bar{U}(\bar{u},\lambda)=M(U(\lambda,u),U_1(\lambda,v)),
\end{equation}
where $\bar{u}=[u,v]^T,
u=[p_1,q_1,r_1,s_1,v_1]^T, v=[p_2,q_2,r_2,s_2,v_2]^T$ are potentials. $U$ is from the generalized D-KN spectral problem (\ref{newhier}) and
\begin{equation}
 U_1=U(\lambda,v)=\begin{bmatrix} \lambda^2 - r_2 & \lambda p_2 + s_2  \\ \lambda q_2 + v_2 & -\lambda^2 + r_2 \end{bmatrix}.
 \end{equation}
The isospectral problem is
\begin{equation}
\label{spic10}
\bar{\phi}_x=\bar{U} \bar{\phi},  
\end{equation}
where $\bar{\phi}= [\psi , \phi ]^T, \psi=[\psi_1, \psi_2]^T$ and $\phi=[\phi_1, \phi_2]^T$. Note that $U$ is the same matrix as (\ref{newhier}).

Assume that the solution to the stationary zero-curvature equation, $\bar{W}_x=[\bar{U},\bar{W}]$, is of the form
\begin{equation}
\label{W}
\bar{W}=\begin{bmatrix} W & W_1 \\ 0 & W \end{bmatrix}\in \tilde{\mathfrak{g}}(\lambda),
W=
\begin{bmatrix} a & b \\  c & -a \end{bmatrix}, W_1=\begin{bmatrix}  e & f \\ g & -e \end{bmatrix}
\end{equation}
then  we get the following matrix formulas:
\begin{equation}
\begin{cases}
W_x=UW-WU, \\
W_{1,x}=U_1W-WU_1+UW_1-W_1U.
\end{cases}
\end{equation}
Solving these two formulas, we get the differential equations:
\begin{equation}
\begin{cases}
\label{rr38}
a_x=-q_1b\lambda+p_1c\lambda-v_1b+s_1c, \\
b_x=-2p_1a\lambda+2b\lambda^2-2s_1a-2r_1b, \\
c_x=2q_1a\lambda-2c\lambda^2+2v_1a+2r_1c,\\
e_x= p_1g\lambda+p_2c\lambda-q_2b\lambda-q_1f\lambda+s_1g+s_2c-v_1f-v_2b, \\
f_x=2b\lambda^2+2f\lambda^2-2p_1e\lambda-2p_2a\lambda-2r_1f-2r_2b-2s_1e-2s_2a, \\
g_x=-2c\lambda^2-2g\lambda^2+2q_1e\lambda+2q_2a\lambda+2r_1g+2r_2c+2v_1e+2v_2a. 
\end{cases}
\end{equation}
By assuming $a,b,c,e,f,g,$ have the following Laurent series expansions:
\begin{equation}
\label{le}
\begin{matrix} a = \sum\limits_{i=0}^{\infty} a_i \lambda^{-i}, & & & b = \sum\limits_{i=0}^{\infty} b_i \lambda^{-i}, & & & c = \sum\limits_{i=0}^{\infty} c_i \lambda^{-i}, \\
e = \sum\limits_{i=0}^{\infty} e_i \lambda^{-i}, & & & f = \sum\limits_{i=0}^{\infty} f_i \lambda^{-i}, & & & g = \sum\limits_{i=0}^{\infty} g_i \lambda^{-i}, \\
 \end{matrix}
\end{equation}
and substituting (\ref{le}) into (\ref{rr38}), we have the recursion relations
\begin{equation}
\label{rr4}
\begin{cases}
b_{i+1}&=\frac{b_{i-1,x}}{2} + p_1a_i+s_1a_{i-1}+r_1b_{i-1},\\
c_{i+1}&=-\frac{c_{i-1,x}}{2} + q_1a_i+v_1a_{i-1}+r_1c_{i-1}, \\
a_{i+1,x}&=-q_1\frac{b_{i,x}}{2}-p_1\frac{c_{i,x}}{2}+(p_1v_1-q_1s_1)a_i-q_1r_1b_i
+p_1r_1c_i+s_1c_{i+1}\\
&-v_1b_{i+1}, \ \\
f_{i+1}&=\frac{f_{i-1,x}}{2}-b_{i+1}+p_2a_i+p_1e_i+s_2a_{i-1}+s_1e_{i-1}+r_2b_{i-1}+r_1f_{i-1}, \\
g_{i+1}&=-\frac{g_{i-1,x}}{2}-c_{i+1}+q_2a_i+q_1e_i+v_2a_{i-1}+v_1e_{i-1}+r_2c_{i-1}+r_1g_{i-1},\\
e_{i+1,x}&=-\frac{g_{i,x}}{2}p_1-\frac{f_{i,x}}{2}q_1+(p_2-p_1)[-\frac{c_{m,x}}{2}+v_1a_m+r_1c_m]\\
&+(q_1-q_2)[\frac{b_{m,x}}{2}+s_1a_m+r_1b_m]+s_1g_{i+1}+s_2c_{i+1}-v_1f_{i+1}-v_2b_{i+1}\\
&+(p_1v_2-q_1s_2)a_i+(p_1v_1-q_1s_1)e_i+p_1r_2c_i-q_1r_2b_i\\
&+p_1r_1g_i-q_1r_1f_i, 
\end{cases}
\end{equation}
for all $i \geq 1$ with initial values
\begin{equation}
\label{iv}
\begin{matrix} a_0=\alpha, &    b_0=c_0=0,&  a_1= 0, & b_1=\alpha p_1, &   c_1=\alpha q_1,\\
                      e_0=\beta, &   f_0=g_0=0,& e_1= 0, & f_1=(\beta-\alpha) p_1+p_2\alpha, & g_1=(\beta-\alpha) q_1+q_2 \alpha, \end{matrix}
\end{equation}
and the conditions for integration
\begin{equation}
\label{conditions}
\begin{cases}
a_i |_{u=0}=b_i |_{u=0}=c_i |_{u=0}=0, \quad i \geq 1, \\
e_i |_{u=0}=f_i |_{u=0}=g_i |_{u=0}=0, \quad i \geq 1,
\end{cases}
\end{equation}
which determine the sequence of $\{a_i,b_i,c_i,e_i,f_i,g_i| i \geq 0\}$ uniquely. For $i=2,3,$ we have the following: 
\begin{flalign*}
b_2& =\alpha s_1, \quad \quad \quad c_2 = \alpha v_1,\quad \quad  a_2 = -\alpha \frac{1}{2}p_1q_1, \\
f_2& =(\beta-\alpha) s_1+\alpha s_2,  \quad \quad  e_2 = (\alpha q_1+\frac{1}{2}\beta q_1 -\frac{1}{2}q_2)p_1-\frac{1}{2}\alpha p_2q_1, \\
g_2& = (\beta- \alpha) v_1+\alpha v_2;  \\
b_3& = \alpha \frac{1}{2} (-p_1^2q_1+2p_1r_1+p_{1,x}), \quad \quad  c_3= -\alpha \frac{1}{2}
 (q_1^2p_1-2q_1r_1+q_{1,x}),\\
a_3&= -\alpha \frac{1}{2} (p_1v_1+q_1s_1),  \\
f_3&= \frac{1}{2}[(\beta-2\alpha)p_{1,x}+\alpha p_{2,x}+((\beta+3\alpha)p_1^2q_1+(2\beta -4\alpha)r_1p_1 \\
&-\alpha(p_2q_1p_1-2r_2p_1+q_2p_1^2)+2\alpha p_2r_1],\\
g_3&=- \frac{1}{2}[(\beta-2\alpha)q_{1,x}+\alpha q_{2,x}- (-\beta+3\alpha)q_1^2p_1\alpha-(2\beta -4\alpha)r_1q_1\\
&+\alpha(q_2p_1q_1-2r_2q_1+p_2q_1^2)-2\alpha q_2r_1],\\
e_3&=(v_1\alpha-\frac{1}{2}v_1\beta-\frac{1}{2}v_2\alpha)p_1+(s_1\alpha-\frac{1}{2}s_1\beta-\frac{1}{2}s_2\alpha)q_1-\frac{1}{2}p_2v_1\alpha-\frac{1}{2}q_2s_1\alpha.\\
%b_4& = \alpha \frac{1}{2} (-p_1^2v_1-2p_1q_1s_1+2r_1s_1+s_{1,x}), \quad  c_4= -\alpha \frac{1}{2}
 %(q_1^2s_1+2q_1p_1v_1-2r_1v_1+v_{1,x}),\\
%a_4&=-\alpha \frac{1}{8}  (-3q_1^2p_1^2+8q_1p_1r_1+4v_1s_1+2q_1p_{1,x}-2p_1q_{1,x}),  \\
%f_4&= \frac{1}{2}[(\beta-2\alpha)s_{1,x}+\alpha s_{2,x}-\alpha p_1^2v_2+(-\beta +3\alpha)v_1p_1^2 \\ %&+(((-2\beta+6\alpha)q_1-2\alpha q_2)s_1-2\alpha(q_1s_2+p_2v_1))p_1+\\
%& (-2\alpha p_2q_1+(-4\alpha+2\beta)r_1+2r_2 \alpha)s_1+2\alpha r_1s_2],\\
%g_4&=-\frac{1}{2}[(\beta-2\alpha)v_{1,x}+\alpha v_{2,x}+\alpha q_1^2s_2-(-\beta +3\alpha)s_1q_1^2 \\ %&-(((-2\beta+6\alpha)p_1-2\alpha p_2)v_1-2\alpha(q_2s_1+p_1v_2))q_1- (-2\alpha q_2p_1+\\
%&(-4\alpha+2\beta)r_1+2r_2 \alpha)v_1-2\alpha r_1v_2],\\
%e_4&=\frac{1}{8} [(-2\beta + 6\alpha)q_1 -2\alpha q_2 )p_{1,x}+((2\beta - 6\alpha)p_1 +2\alpha p_2)q_{1,x} +2\alpha(-
%q_1p_{2,x}+p_1q_{2,x})\\
%&+3q_1((\beta - 4\alpha)q_1+2 \alpha q_2)p_1^2 +(6 \alpha p_2q_1^2+((-8\beta+24\alpha)r_1-8\alpha r_2)q_1\\
%& - 8\alpha q_2r_1)p_1 - 8\alpha q_1p_2r_1+((-4\beta +8 \alpha)v_1  - 4\alpha v_2)s_1 -4\alpha s_2v_1]. 
\end{flalign*}
All $\{a_i,b_i,c_i,e_i,f_i,g_i| i \geq 0\}$ can be proven as differential polynomials of $\bar{u}$ with respect to $x$. 
\begin{proposition}
\label{proofs}
Let $ \{ a_i,b_i,c_i,e_i,f_i,g_i| i=0,1 \}$ be given by equations (\ref{iv}). Then all functions $\{a_i,b_i,c_i,e_i,f_i,g_i| i \geq 0 \}$ determined by equation (\ref{rr4}) with the conditions (\ref{conditions}) are differential polynomials in $\bar{u}$ with respect to $x$, and thus, are local.
\end{proposition}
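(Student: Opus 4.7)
The plan is to proceed by strong induction on $i$, using two $x$-independent trace invariants of $\bar W$ to bypass the two integrations hidden in (\ref{rr4}). The base cases $i = 0, 1$ are immediate from (\ref{iv}), and the explicit computation through $i = 3$ in the text serves as a sanity check.

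For the inductive step, the first, second, fourth, and fifth equations of (\ref{rr4}) express $b_{i+1}, c_{i+1}, f_{i+1}, g_{i+1}$ algebraically in terms of $\bar u$, its $x$-derivatives, and already-established sequence entries; by hypothesis these four are differential polynomials in $\bar u$. The delicate entries are $a_{i+1}$ and $e_{i+1}$, which are a priori defined only as $x$-primitives. Here I would exploit that $W_x = [U, W]$ gives $(\mathrm{tr}\, W^2)_x = 0$, so $a^2 + bc$ is $x$-independent; the conditions (\ref{conditions}) together with $a_0 = \alpha$ and $b_0 = c_0 = 0$ pin this Casimir to $\alpha^2$. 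Matching the $\lambda^{-(i+1)}$ coefficient of $a^2 + bc = \alpha^2$ produces
\[
2\alpha\, a_{i+1} = -\sum_{j=1}^{i} a_j\, a_{i+1-j} - \sum_{j+k=i+1} b_j\, c_k,
\]
which is a differential polynomial in $\bar u$ by the induction hypothesis and which automatically satisfies $a_{i+1}|_{u=0} = 0$. Uniqueness for the initial-value problem given by the third line of (\ref{rr4}) together with (\ref{conditions}) identifies this expression with the required $a_{i+1}$.

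For $e_{i+1}$ a parallel device works with the second invariant $\mathrm{tr}(W W_1) = 2ae + bg + cf$. Using $W_x = [U, W]$ and $W_{1,x} = [U_1, W] + [U, W_1]$ together with cyclicity of trace, a short calculation shows
\[
\bigl(\mathrm{tr}(W W_1)\bigr)_x = \mathrm{tr}\bigl([U, W W_1]\bigr) + \mathrm{tr}\bigl(W U_1 W - W^2 U_1\bigr) = 0,
\]
and the initial data fix the constant to $2\alpha\beta$. Reading off the $\lambda^{-(i+1)}$ coefficient gives an algebraic identity solving $e_{i+1}$ in terms of $a_{i+1}$ (now known) and the lower-order sequence entries, and again the condition $e_{i+1}|_{u=0} = 0$ is automatic. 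I expect the main technical nuisance to be the bookkeeping in this last step, where three Laurent series are coupled together, but no new ideas are needed beyond those in the scalar D-KN case treated in \cite{morgan}; locality then follows by induction.
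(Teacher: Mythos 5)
Your proposal is correct and follows essentially the same route as the paper: both use the $x$-independence of $\mathrm{tr}(W^2)=2(a^2+bc)$ and of $\mathrm{tr}(WW_1)=2ae+bg+cf$, pinned by the conditions at $u=0$, to replace the two integrations for $a_{i+1}$ and $e_{i+1}$ with algebraic formulas and then conclude by induction. The only differences are cosmetic: you derive $(\mathrm{tr}(WW_1))_x=0$ by trace cyclicity rather than the paper's direct expansion, and your constant $2\alpha\beta$ is the correct value (the paper's $\alpha\beta$ appears to be a typographical slip).
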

\begin{proof}
We compute from the enlarged stationary zero-curvature equation, $\bar{W}_x=[\bar{U},\bar{W}]$, 
\begin{equation}
\frac{d}{dx} \text{tr}(\bar{W}^2)=2 \text{tr}(\bar{W}\bar{W}_x)=2\text{tr}(\bar{W}[\bar{U},\bar{W}])=2(\text{tr}(\bar{W}^2\bar{U})-\text{tr}(\bar{W}^2\bar{U}))=0,
\end{equation}
and seeing that the $\text{tr}(\bar{W}^2)=4(a^2+bc)$, we have
\begin{equation}
a^2+bc=(a^2+bc)|_{u=0}=\alpha^2,
\end{equation}
following from the initial data (\ref{iv}). Now, we use (\ref{le}), the Laurent expansions of $a,b,c$, to give
\begin{equation}
\label{proof}
a_i=\frac{\alpha}{2}- \frac{1}{2\alpha} \sum_{k+l=i, k,l \geq 1} a_ka_l -\frac{1}{2\alpha} \sum_{k+l=i, k,l \geq 0} b_kc_l, i \geq 1.
\end{equation}
Based on the recursion relation above (\ref{proof}) and the previous (\ref{rr4}), we use mathematical induction to see that all functions $\{a_i,b_i,c_i, i \geq 0\}$ are differential polynomials in $u$ with respect to $x$, and therefore, are local.

Now, we have
\begin{align*}
\frac{d}{dx}(2ae+fc+gb)=& 2a_xe+2ae_x+f_xc+fc_x+g_xb+gb_x \\
  =& 2e(-q_1b\lambda+p_1c\lambda-v_1b+s_1c)+2a( p_1g\lambda+p_2c\lambda\\
& -q_2b\lambda-q_1f\lambda+s_1g+s_2c-v_1f-v_2b)+c(2b\lambda^2\\
&+2f\lambda^2-2p_1e\lambda-p_2a\lambda-2r_1f-2r_2b-2s_1e-2s_2a )\\
& +f(2q_1a\lambda-2c\lambda^2 +2v_1a+2r_1c)+b(-2c\lambda^2-2g\lambda^2 \\
&+2q_1e\lambda+q_2a\lambda+2r_1g+2r_2c+2v_1e+2v_2a)\\
&+ g(-2p_1a\lambda+2b\lambda^2-2s_1a-2r_1b)=0.
\end{align*}
Similarly, we get
$$2ae+fc+gb=(2ae+fc+gb)|_{\bar{u}=0}=\alpha \beta. $$
Therefore, using the Laurent expansions of $a, b, c, e, f,$ and $g$ in (\ref{le}), we have
\begin{equation}
\label{proof1}
e_i=\beta-\frac{\beta}{\alpha}a_i-\frac{1}{2\alpha} \sum_{k+l=i, k,l \geq 0} f_kc_l-\frac{1}{2\alpha} \sum_{k+l=i, k,l \geq 0} g_kb_l-\frac{1}{\alpha}\sum_{k+l=i, k,l \geq 1} a_ke_l,
\end{equation}
for all $i \geq 1$. Using the localness of $\{a_i,b_i,c_i| i \geq 0 \}$ and the recursive relations (\ref{rr4}) and (\ref{proof1}), we may see through mathematical induction that all functions $\{e_i,f_i,g_i| i \geq 0 \}$ are differential polynomials in $\bar{u}$ with respect to $x$.
 This completes the proof.
\end{proof} 

Now, we need to solve the zero-curvature equations,
\begin{equation}
\label{zce}
\bar{U}_{t_m}-\bar{V}_x^{[m]}+[\bar{U},\bar{V}^{[m]}]=0, \quad m \geq 0,
\end{equation}
which are the compatibility conditions between (\ref{spic10}) and the temporal problems, 
\begin{equation}
\label{spic11}
\bar{\phi}_{t_m}=\bar{V}^{[m]} \bar{\phi} = \bar{V}^{[m]} (\bar{u},\lambda) \bar{\phi} , \quad m \geq 0.
\end{equation}
In order to do this, we introduce a series of Lax operators
\begin{equation}
\label{lax}
\bar{V}^{[m]}(\bar{u},\lambda)=( \lambda^m \bar{W})_{+}.
\end{equation}
After solving (\ref{zce}), we generate a hierarchy of soliton equations, for all $m \geq 0$,
\begin{equation}
\label{shier1}
\bar{u}_{t_m}=\bar{K}_m=  \begin{bmatrix}
2b_{m+1}\\
-2c_{m+1}\\
q_1b_{m+1}-p_1c_{m+1} \\
-2p_1a_{m+1}+2b_{m+2}\\
2q_1a_{m+1}-2c_{m+2}\\
2f_{m+1}+2b_{m+1} \\
-2g_{m+1}-2c_{m+1}\\
q_1f_{m+1}+q_2b_{m+1}-p_1g_{m+1}- p_2c_{m+1} \\
-2p_1e_{m+1}-2p_2a_{m+1}+2b_{m+2}+2f_{m+2}\\
2q_1e_{m+1}+2q_2a_{m+1}-2c_{m+2}-2g_{m+2} \end{bmatrix}.
\end{equation}
We have
\begin{equation}
\label{recurr}
\bar{K}_m=\bar{\Phi} \bar{K}_{m-1}=\bar{\Phi}^m\bar{K}_0 , \quad m \geq 0,
\end{equation}
where
\begin{equation}
\label{phi1}
\bar{\Phi}=\begin{bmatrix} \Phi & 0 \\
\Phi_1 - \Phi & \Phi \end{bmatrix}.
\end{equation}
where $\Phi$ is the recursion operator of the original system $u_t=K(u)$ equal to
\begin{equation}
\label{phi10a}
\begin{bmatrix} -p_1 \partial^{-1}v_1   &  -p_1  \partial^{-1}s_1   & 2s_1  \partial^{-1} & 1-p_1  \partial^{-1}q_1  
& -p_1 \partial^{-1}p_1 \\
-s_1 \partial^{-1}q_1 & -s_1 \partial^{-1}p_1  & & & \\
 & & & & \\
q_1 \partial^{-1}v_1 & q_1 \partial^{-1}s_1  & -2v_1\partial^{-1} & q_1 \partial^{-1}q_1 & 1+q_1 \partial^{-1}p_1 \\
+ v_1\partial^{-1}q_1 & +v_1\partial^{-1}p_1 & & & \\
 & & & & \\
p_1v_1\partial^{-1}\dfrac{q_1}{2} & p_1v_1\partial^{-1}\dfrac{p_1}{2} & -p_1v_1\partial^{-1} & \dfrac{q_1}{2}  &  \dfrac{p_1}{2} \\
-q_1s_1\partial^{-1}\dfrac{q_1}{2} & -q_1s_1\partial^{-1}\dfrac{p_1}{2}  & +q_1s_1\partial^{-1}& & \\
 & & & & \\
 \frac{1}{2}\partial+r_1-s_1 \partial^{-1}v_1& -s_1 \partial^{-1}s_1  & 2p_1r_1\partial^{-1} & -s_1 \partial^{-1}q_1  
& -s_1 \partial^{-1}p_1 \\
 -p_1r_1\partial^{-1}q_1  &  -\partial p_1 \partial^{-1} \dfrac{p_1}{2} &+\partial p_1 \partial^{-1}  & & \\
 -\partial p_1\partial^{-1}\dfrac{q_1}{2}&-p_1r_1\partial^{-1}p_1 & & & \\
& & & & \\
v_1 \partial^{-1}v_1 & -\frac{1}{2}\partial+r_1+v_1 \partial^{-1}s_1 & -2q_1r_1\partial^{-1} & v_1 \partial^{-1}q_1 & v_1 \partial^{-1}p_1 \\
+\partial q_1 \partial^{-1}\dfrac{q_1}{2} &+ q_1r_1\partial^{-1}p_1  & -\partial q_1 \partial^{-1} & & \\
+q_1r_1 \partial^{-1}q_1 & +\partial q_1 \partial^{-1}\dfrac{p_1}{2}  & & & 
\end{bmatrix}
\end{equation}
%with entries
%\begin{equation}
%\label{phi1}
%\begin{aligned}
%&[\Phi]_{11}= -p_1 \partial^{-1}v_1  -s_1 \partial^{-1}q_1, \;   
%[\Phi]_{12}= -p_1  \partial^{-1}s_1  -s_1 \partial^{-1}p_1, \\
%&[\Phi]_{13}= 2s_1  \partial^{-1}, \;
%[\Phi]_{14}= 1-p_1  \partial^{-1}q_1,  \;
%[\Phi]_{15}= -p_1 \partial^{-1}p_1, \\
%&[\Phi]_{21}= q_1 \partial^{-1}v_1 +  v_1\partial^{-1}q_1, \;
%[\Phi]_{22}= q_1 \partial^{-1}s_1 +v_1\partial^{-1}p_1, \; 
%[\Phi]_{23}= -2v_1\partial^{-1}, \\ 
%&[\Phi]_{24}= q_1 \partial^{-1}q_1, \;
%[\Phi]_{25}= 1+q_1 \partial^{-1}p_1, \\
%&[\Phi]_{31}= (p_1v_1-q_1s_1)\partial^{-1}\dfrac{q_1}{2}, \;
%[\Phi]_{32}=   (p_1v_1-q_1s_1)\partial^{-1}\dfrac{p_1}{2}, \\
%&[\Phi]_{33}= - (p_1v_1-q_1s_1)\partial^{-1}, \;
%[\Phi]_{34}= \dfrac{q_1}{2}, \; 
%[\Phi]_{35}=  \dfrac{p_1}{2}, \\
%&[\Phi]_{41}= \frac{1}{2}\partial+r_1-s_1 \partial^{-1}v_1 -p_1r_1\partial^{-1}q_1   -\partial p_1\partial^{-1}\dfrac{q_1}{2},\\
%&[\Phi]_{42}=-s_1 \partial^{-1}s_1-\partial p_1 \partial^{-1} \dfrac{p_1}{2}-p_1r_1\partial^{-1}p_1, \\
%&[\Phi]_{43}= 2p_1r_1\partial^{-1}+\partial p_1 \partial^{-1}, \;
%[\Phi]_{44}= -s_1 \partial^{-1}q_1, \; 
%[\Phi]_{45}= -s_1 \partial^{-1}p_1 \\ 
%&[\Phi]_{51}= v_1 \partial^{-1}v_1+\partial q_1 \partial^{-1}\dfrac{q_1}{2}+q_1r_1 \partial^{-1}q_1, \\
%&[\Phi]_{52}=  -\frac{1}{2}\partial+r_1+v_1 \partial^{-1}s_1+ q_1r_1\partial^{-1}p_1+\partial q_1 \partial^{-1}\dfrac{p_1}{2}, \\
%&[\Phi]_{53}= -2q_1r_1\partial^{-1}-\partial q_1 \partial^{-1}, \; 
%[\Phi]_{54}= v_1 \partial^{-1}q_1, \;
%[\Phi]_{55}= v_1 \partial^{-1}p_1 \\
%\end{aligned}
%\end{equation}
and $\Phi_1$ is the supplemental matrix differential operator with entries
%\begin{equation}
%\label{phi10}
%\Phi_1=\begin{bmatrix} [\Phi_1]_{11} & [\Phi_1]_{12} & [\Phi_1]_{13} & [\Phi_1]_{14} & [\Phi_1]_{15} \\
%[\Phi_1]_{21} & [\Phi_1]_{22} & [\Phi_1]_{23} & [\Phi_1]_{24} & [\Phi_1]_{25} \\
%[\Phi_1]_{31} & [\Phi_1]_{32} & [\Phi_1]_{33} & [\Phi_1]_{34} & [\Phi_1]_{35} \\
%[\Phi_1]_{41} & [\Phi_1]_{42} & [\Phi_1]_{43} & [\Phi_1]_{44} & [\Phi_1]_{45} \\
%[\Phi_1]_{51} & [\Phi_1]_{52} & [\Phi_1]_{53} & [\Phi_1]_{54} & [\Phi_1]_{55} 
%\end{bmatrix},
%\end{equation}
\begin{equation}
\label{phi10b}
\begin{aligned} 
&[\Phi_1]_{11}=-p_1\partial^{-1} v_2 - p_2 \partial^{-1} v_1 - s_2 \partial^{-1} q_1 -s_1 \partial^{-1}q_2 ,\\
& [\Phi_1]_{12}=  -p_1\partial^{-1} s_2 - p_2 \partial^{-1} s_1 - s_2 \partial^{-1} p_1 -s_1 \partial^{-1}p_2 \\ &[\Phi_1]_{13}= 2 s_2 \partial^{-1}+2s_1 \partial^{-1}, [\Phi_1]_{14}= 1  -p_1\partial^{-1} q_2 - p_2 \partial^{-1} q_1, \\
&[\Phi_1]_{15}= -p_1\partial^{-1} p_2 - p_2 \partial^{-1} p_1, & \\
&[\Phi_1]_{21}=q_1\partial^{-1} v_2 + q_2 \partial^{-1} v_1 +v_1 \partial^{-1} q_2 +v_2\partial^{-1}q_1 ,\\
&[\Phi_1]_{22}=  q_1\partial^{-1} s_2 +q_2 \partial^{-1} s_1+v_1 \partial^{-1} p_2 +v_2\partial^{-1}p_1, \\
&[\Phi_1]_{23}= -2v_1 \partial^{-1} -2 v_2\partial^{-1},\\
&[\Phi_1]_{24}=    q_1\partial^{-1} q_2 + q_2 \partial^{-1} q_1, [\Phi_1]_{25}= 1+q_1\partial^{-1} p_2 + q_2 \partial^{-1} p_1, \\
&[\Phi_1]_{31}= \dfrac{(v_1p_2-q_2s_1)}{2}\partial^{-1}q_1+\dfrac{(p_1v_2-q_1s_2)}{2}\partial^{-1}q_1- \dfrac{(p_1v_1-q_1s_1)}{2}\partial^{-1}q_1\\
&+\dfrac{(p_1v_1-q_1s_1)}{2}\partial^{-1}q_2 \\
&[\Phi_1]_{32}= \dfrac{(v_1p_2-q_2s_1)}{2}\partial^{-1}p_1+\dfrac{(p_1v_2-q_1s_2)}{2}\partial^{-1}p_1- \dfrac{(p_1v_1-q_1s_1)}{2}\partial^{-1}p_1\\
&+\dfrac{(p_1v_1-q_1s_1)}{2}\partial^{-1}p_2 \\
&[\Phi_1]_{33}= -(p_1v_2-q_1s_2)\partial^{-1}-(p_2v_1-q_2s_1)\partial^{-1}\\
&[\Phi_1]_{34}= \frac{q_2}{2}, [\Phi_1]_{35}= \frac{p_2}{2}\\ 
&[\Phi_1]_{41}=r_2-s_1\partial^{-1} v_2 - s_2 \partial^{-1} v_1- \partial \dfrac{(p_2-p_1)}{2} \partial^{-1} q_1 -(r_1p_2+p_1r_2)\partial^{-1}q_1\\
&+p_1r_1\partial^{-1}q_1-r_1p_1\partial^{-1}q_2 -\partial \dfrac{p_1}{2} \partial^{-1}q_2, \\
&[\Phi_1]_{42}=  -s_1\partial^{-1} s_2 - s_2 \partial^{-1} s_1 - \partial \dfrac{(p_2-p_1)}{2} \partial^{-1} p_1-(r_1p_2+p_1r_2)\partial^{-1}p_1 \\
&+p_1r_1\partial^{-1}p_1-r_1q_1\partial^{-1}p_2 -\partial \dfrac{p_1}{2} \partial^{-1}p_2, \\
&[\Phi_1]_{43}= \partial (p_2-p_1) \partial^{-1} +2(r_1p_2+p_1r_2)\partial^{-1}+\partial p_1 \partial^{-1}\\
&[\Phi_1]_{44}= -s_1\partial^{-1} q_2 - s_2 \partial^{-1} q_1, [\Phi_1]_{45}= -s_1\partial^{-1} p_2 - s_2 \partial^{-1} p_1 \\
&[\Phi_1]_{51}=v_1\partial^{-1} v_2 +v_2 \partial^{-1} v_1 + \partial \dfrac{(q_1-q_2)}{2} \partial^{-1} q_1+(r_1q_2+q_1r_2)\partial^{-1}q_1 \\
&-q_1r_1\partial^{-1}q_1+r_1q_1\partial^{-1}q_2 -\partial \dfrac{q_1}{2} \partial^{-1}q_2, \\  
&[\Phi_1]_{52}= r_2+ v_1\partial^{-1} s_2 +v_2 \partial^{-1} s_1 + \partial \dfrac{(q_1-q_2)}{2} \partial^{-1} p_1 +(r_1q_2+q_1r_2)\partial^{-1}p_1\\
&-q_1r_1\partial^{-1}p_1+r_1q_1\partial^{-1}p_2 -\partial \dfrac{q_1}{2} \partial^{-1}p_2 \\   
&[\Phi_1]_{53}=- \partial(q_1-q_2) \partial^{-1}  -2(r_1q_2+q_1r_2)\partial^{-1}+\partial q_1 \partial^{-1}, \\
&[\Phi_1]_{54}=   v_1\partial^{-1} q_2 +v_2 \partial^{-1} q_1, [\Phi_1]_{55}= v_1\partial^{-1} p_2 +v_2 \partial^{-1} p_1 
\end{aligned}
\end{equation}
with $\partial=\dfrac{\partial}{\partial x}$ and $\partial^{-1}$ as the inverse operator of $\partial$. 

\subsection{A specific reduction with two less potentials}
\label{bhr}

A spectral matrix, $\bar{U}$, chosen from $\tilde{\mathfrak{g}}(\lambda)$, is of the form: 
\begin{equation}
\label{smic2}
\bar{U}=M(U(\lambda,u),U_1(\lambda,v))
\end{equation}
where $r_1, r_2$ from (\ref{smic1}) are replaced with $\widetilde{r_1}=\frac{1}{2}p_1q_1, \widetilde{r_2}= \frac{1}{2}(p_1q_2+p_2q_1-p_1q_1)$, and $\bar{u}=(u,v)^T$, $u=[p_1,q_1,s_1,v_1]^T, v=[p_2,q_2,s_2,v_2]^T$ are potentials. The corresponding spacial spectral  problem is
\begin{equation}
\label{spic20}
\bar{\phi}_x=\bar{U}(\bar{u},\lambda) \bar{\phi}, 
\end{equation}
where $ \bar{\phi}= [\psi, \phi ]^T, \psi=[\psi_1, \psi_2]^T$ and $\phi=[\phi_1, \phi_2]^T$.

Again, we assume that the solution to the stationary zero-curvature equation, $\bar{W}_x=[\bar{U},\bar{W}]$, is of the form as (\ref{W}). Solving the stationary zero-curvature equation (\ref{zce}), we have the following differential equations as (\ref{rr38}) with $r_1=\widetilde{r_1}$ and $r_2=\widetilde{r_2}$. By assuming $a,b,c,e,f,g,$ have the Laurent expansions (\ref{le}), we have the recursion relations
\begin{equation}
\label{rr8}
\begin{cases}
b_{i+1}&=\frac{b_{i-1,x}}{2} + p_1a_i+s_1a_{i-1}+\frac{1}{2}p_1q_1b_{i-1},\\
c_{i+1}&=-\frac{c_{i-1,x}}{2} + q_1a_i+v_1a_{i-1}+\frac{1}{2}p_1q_1c_{i-1}, \\
a_{i+1,x}&=-q_1\frac{b_{i,x}}{2}-p_1\frac{c_{i,x}}{2}+(p_1v_1-q_1s_1)a_i-\frac{1}{2}p_1q_1^2b_i\\
&+\frac{1}{2}p_1^2q_1c_i+s_1c_{i+1}-v_1b_{i+1}, \ \\
f_{i+1}&=\frac{f_{i-1,x}}{2}-b_{i+1}+p_2a_i+p_1e_i+s_2a_{i-1}+s_1e_{i-1}\\
&+\frac{1}{2}(p_1q_2+p_2q_1-p_1q_1)b_{i-1}+\frac{1}{2}p_1q_1f_{i-1}, \\
g_{i+1}&=-\frac{g_{i-1,x}}{2}-c_{i+1}+q_2a_i+q_1e_i+v_2a_{i-1}+v_1e_{i-1}\\
&+\frac{1}{2}(p_1q_2+p_2q_1-p_1q_1)c_{i-1}+\frac{1}{2}p_1q_1g_{i-1},\\
e_{i+1,x}&=-\frac{p_1g_{i,x}}{2}-\frac{q_1f_{i,x}}{2}+(p_2-p_1)[-\frac{c_{m,x}}{2}+v_1a_m+\frac{1}{2}p_1q_1c_m]\\
&+(q_1-q_2))[\frac{b_{m,x}}{2}+s_1a_m+\frac{1}{2}p_1q_1b_m]\\
&+s_1g_{i+1}+s_2c_{i+1}-v_1f_{i+1}-v_2b_{i+1}+(p_1v_2-q_1s_2)a_i\\
&+(p_1v_1-q_1s_1)e_i+\frac{1}{2}p_1(p_1q_2+p_2q_1-p_1q_1)c_i \\
&-\frac{1}{2}q_1(p_1q_2+p_2q_1-p_1q_1)b_i+\frac{1}{2}p_1^2q_1g_i-\frac{1}{2}p_1q_1^2f_i, 
\end{cases}
\end{equation}
for all $i \geq 1$ with the same initial values (\ref{iv}) and conditions for integration (\ref{conditions}) which determine the sequence of $\{a_i,b_i,c_i,e_i,f_i,g_i| i \geq 0\}$ uniquely. All $\{a_i,b_i,c_i,e_i,f_i,g_i\}$ can be proven as differential polynomials of $\bar{u}$ with respect to $x$.

\begin{proposition}
\label{proofs2}
Let $\{a_i,b_i,c_i,e_i,f_i,g_i| i=0,1 \}$ be given by equation (\ref{iv}). Then all functions $\{a_i,b_i,c_i,e_i,f_i,g_i| i \geq 0 \}$ determined by equations (\ref{rr8}) with the conditions (\ref{conditions}) are differential polynomials in $\bar{u}$ with respect to $x$, and thus, are local.
\end{proposition}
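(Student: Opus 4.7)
The plan is to mirror the proof of Proposition~\ref{proofs} essentially verbatim, exploiting the fact that this case is obtained by specializing $r_1 \mapsto \widetilde{r_1} = \tfrac{1}{2}p_1q_1$ and $r_2 \mapsto \widetilde{r_2} = \tfrac{1}{2}(p_1q_2+p_2q_1-p_1q_1)$, both of which are themselves polynomials in the remaining potentials $\bar{u}$. Since the ambient matrix loop algebra $\tilde{\mathfrak{g}}(\lambda)$ and the block form (\ref{W}) of $\bar{W}$ are untouched by the reduction, the same $2\times2$ block trace computations used in Proposition~\ref{proofs} will still apply.

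First I would reestablish the two scalar conservation laws. Because the derivation of $\frac{d}{dx}\text{tr}(\bar{W}^2)=0$ and $\frac{d}{dx}(2ae+fc+gb)=0$ from $\bar{W}_x = [\bar{U},\bar{W}]$ uses only the block structure of $\bar{U}$ and $\bar{W}$ and not the specific values of $r_1,r_2$, the same argument yields $a^2+bc=\alpha^2$ and $2ae+fc+gb=\alpha\beta$, the constants being pinned down by the initial values (\ref{iv}) together with the integration conditions (\ref{conditions}). Substituting the Laurent expansions (\ref{le}) and matching coefficients in $\lambda^{-i}$ then produces algebraic formulas for $a_i$ and $e_i$ strictly analogous to (\ref{proof}) and (\ref{proof1}), each expressed in terms of $\{a_j,b_j,c_j,e_j,f_j,g_j\}$ of strictly smaller index.

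Next I would run a double induction on $i$ using (\ref{rr8}). The base cases $i=0,1$ are immediate from (\ref{iv}). For the inductive step, the first two relations of (\ref{rr8}) show $b_{i+1}$ and $c_{i+1}$ are differential polynomials in $\bar{u}$ once the lower-order data are, because the coefficients $p_1,q_1,s_1,v_1,\widetilde{r_1}$ all lie in the polynomial algebra of $\bar{u}$; the algebraic identity for $a^2+bc$ then produces $a_{i+1}$ without invoking $\partial^{-1}$. An entirely parallel argument on $f_{i+1},g_{i+1},e_{i+1}$ — using the coefficients involving $\widetilde{r_2}$ and the identity $2ae+fc+gb=\alpha\beta$ — finishes the induction.

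The main point that requires genuine verification, rather than the essentially mechanical transfer of Proposition~\ref{proofs}, is a consistency check: one must confirm that the purely algebraic formulas for $a_{i+1}$ and $e_{i+1}$ agree with the differential recursions for $a_{i+1,x}$ and $e_{i+1,x}$ in the third and last lines of (\ref{rr8}) under the substitutions $r_j \mapsto \widetilde{r_j}$. This reduces to differentiating the trace identities once and matching the result against (\ref{rr8}); it is bookkeeping, but the more elaborate form of $\widetilde{r_2}$ makes it the place where a sign or term could be miscopied. Once this is settled, the induction closes and locality follows.
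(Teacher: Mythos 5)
Your proposal is correct and is exactly what the paper intends: the paper omits the proof of Proposition \ref{proofs2}, stating only that it is similar to Proposition \ref{proofs}, and your argument carries over that proof by noting that the reduction $r_1\mapsto\widetilde{r_1}$, $r_2\mapsto\widetilde{r_2}$ replaces potentials by differential polynomials in $\bar{u}$, so the trace identities $a^2+bc=\alpha^2$ and $2ae+fc+gb=\alpha\beta$ and the induction on the recursion relations (\ref{rr8}) go through unchanged.
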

\begin{proof}
For brevity, we leave the proof out. It is similar to Proposition \ref{proofs}.
\end{proof} 

We  solve the zero-curvature equations (\ref{zce}) with the Lax matrices (\ref{lax}) to generate a hierarchy of soliton equations for all $m \geq 0$,
\begin{equation}
\label{shier2}
\bar{u}_{t_m}=\bar{K}_m=  \begin{bmatrix}
2b_{m+1}\\
-2c_{m+1}\\
-2p_1a_{m+1}+2b_{m+2}\\
2q_1a_{m+1}-2c_{m+2}\\
2f_{m+1}+2b_{m+1} \\
-2g_{m+1}-2c_{m+1}\\
-2p_1e_{m+1}-2p_2a_{m+1}+2b_{m+2}+2f_{m+2}\\
2q_1e_{m+1}+2q_2a_{m+1}-2c_{m+2}-2g_{m+2} \end{bmatrix}.
\end{equation}
We have
\begin{equation}
\bar{K}_m=\bar{\Phi} \bar{K}_{m-1}=\bar{\Phi}^m\bar{K}_0 , \quad m \geq 0,
\end{equation}
where $\bar{\Phi}$ is a recursion operator determined from (\ref{rr8}) and given by
\begin{equation}
\label{ro}
\bar{\Phi}=\begin{bmatrix} \Phi & 0 \\
\Phi_1 - \Phi & \Phi \end{bmatrix}.
\end{equation}
The matrix blocks of $\bar{\Phi}$ are defined by 
\begin{equation}
\label{phi2}
\Phi=\begin{bmatrix} -p_1 \partial^{-1}v_1   &  -p_1  \partial^{-1}s_1    & 1-p_1  \partial^{-1}q_1  & -p_1  \partial^{-1}p_1  \\
q_1  \partial^{-1}v_1 & q_1  \partial^{-1}s_1   & q_1  \partial^{-1}q_1  & 1+q_1  \partial^{-1}p_1  \\
 \frac{1}{2}\partial+\widetilde{r_1} -s_1  \partial^{-1}v_1 & -s_1  \partial^{-1}s_1   & -s_1  \partial^{-1}q_1   & -s_1  \partial^{-1}p_1  \\
v_1  \partial^{-1}v_1  & -\frac{1}{2}\partial+\widetilde{r_1} +v_1  \partial^{-1}s_1 & v_1  \partial^{-1}q_1  & v_1  \partial^{-1}p_1  \end{bmatrix} \\
\end{equation}
and
%\begin{align*} 
%[\Phi_1]_{11}=&-p_1\partial^{-1} v_2 - p_2 \partial^{-1} v_1   &[\Phi_1]_{12}= & -p_1\partial^{-1} s_2 - p_2 \partial^{-1} %s_1  &[\Phi_1]_{13}=& 0\\
%[\Phi_1]_{14}= & 1  -p_1\partial^{-1} q_2 - p_2 \partial^{-1} q_1 &[\Phi_1]_{15}=& -p_1\partial^{-1} p_2 - p_2 %\partial^{-1} p_1 \\
%[\Phi_1]_{21}=&q_1\partial^{-1} v_2 + q_2 \partial^{-1} v_1   &[\Phi_1]_{22}= & q_1\partial^{-1} s_2 +q_2 %\partial^{-1} s_1  &[\Phi_1]_{23}=& 0\\
%[\Phi_1]_{24}= &   p_1\partial^{-1} q_2 + p_2 \partial^{-1} q_1 &[\Phi_1]_{25}=& 1+p_1\partial^{-1} p_2 + p_2 %\partial^{-1} p_1 \\
%[\Phi_1]_{31}=& 0 & [\Phi_1]_{32}=& 0 & [\Phi_1]_{33}=& 0 \\
%[\Phi_1]_{34}=& \frac{q_2}{2} & [\Phi_1]_{35}=& \frac{p_2}{2}\\
%[\Phi_1]_{41}=&\frac{1}{2}(p_1q_2+p_2q_1-p_1q_1)-s_1\partial^{-1} v_2 - s_2 \partial^{-1} v_1   &[\Phi_1]_{42}= & 
%-s_1\partial^{-1} s_2 - s_2 %\partial^{-1} s_1  &[\Phi_1]_{43}=& 0\\
%[\Phi_1]_{44}= &   -s_1\partial^{-1} q_2 - s_2 \partial^{-1} q_1 &[\Phi_1]_{45}=& -s_1\partial^{-1} p_2 - s_2 %\partial^{-1} p_1 \\
%[\Phi_1]_{51}=&v_1\partial^{-1} v_2 +v_2 \partial^{-1} v_1   &[\Phi_1]_{52}= &\frac{1}{2}(p_1q_2+p_2q_1-p_1q_1)+ %v_1\partial^{-1} s_2 +v_2 %\partial^{-1} s_1  &[\Phi_1]_{53}=& 0\\
%[\Phi_1]_{54}= &  v_1\partial^{-1} q_2 +v_2 \partial^{-1} q_1 &[\Phi_1]_{55}=& v_1\partial^{-1} p_2 +v_2 %\partial^{-1} p_1 
%\end{align*}
\begin{equation}
\label{phi20}
\Phi_1=\begin{bmatrix}-p_1\partial^{-1} v_2    &  -p_1\partial^{-1} s_2   & 1  -p_1\partial^{-1} q_2  & -p_1  \partial^{-1}p_2  \\
  -p_2 \partial^{-1} v_1   &  -p_2 \partial^{-1} s_1    &  -p_2 \partial^{-1} q_1  & -p_2  \partial^{-1}p_1  \\
 & &  & \\
q_1  \partial^{-1}v_2 & q_1  \partial^{-1}s_2   & q_1  \partial^{-1}q_2  & 1+q_1  \partial^{-1}p_2  \\
+q_2  \partial^{-1}v_1 & +q_2  \partial^{-1}s_1      & +q_2  \partial^{-1}q_1  & +q_2  \partial^{-1}p_1  \\
 & &  & \\
 \widetilde{r_2} -s_1  \partial^{-1}v_2 & -s_1  \partial^{-1}s_2  & -s_1  \partial^{-1}q_2   & -s_1  \partial^{-1}p_2  \\
  -s_2  \partial^{-1}v_1 & -s_2  \partial^{-1}s_1     & -s_2  \partial^{-1}q_1   & -s_2  \partial^{-1}p_1  \\
 & &  & \\
v_1  \partial^{-1}v_2  & \widetilde{r_2} +v_2  \partial^{-1}s_1  & v_1  \partial^{-1}q_2  & v_1  \partial^{-1}p_2  \\
+v_2  \partial^{-1}v_1  & +v_1  \partial^{-1}s_2 &+v_2  \partial^{-1}q_1  & +v_2  \partial^{-1}p_1  \\ \end{bmatrix} 
\end{equation}
where $\widetilde{r_1}= \frac{1}{2}p_1q_1$ and $\widetilde{r_2}= \frac{1}{2}(p_1q_2+p_2q_1-p_1q_1)$.

A specific example can be found from the reduced hierarchy of integrable couplings (\ref{shier2}) when $m=6$ by setting the eight potentials and $\alpha$ and $\beta$ to be the following: $\{ p_1=q_1=0, s_1=u, v_1=-u, p_2=q_2=v, s_2=w, v_2=r, \alpha=-4, \beta=-8 \}$.
We find a coupled mKdV \cite{ic14,ic15} system of equations:
\begin{equation}
\label{cmkdv}
\begin{cases}
u_t=-u_{xxx}-6u^2u_x, \\
v_t=-v_{xxx}-4uvu_x-2u^2v_x+(4r-4w)u^2, \\
w_t=-w_{xxx}+u_{xxx}+(6u^2+(4v-4w)u)u_x-4u^2v_x-2u^2w_x, \\
r_t=-r_{xxx}-u_{xxx}+(-6u^2+(-4r+4v)u)u_x-2u^2r_x-2u^2v_{x}.
\end{cases}
\end{equation}
%Or, we may find a coupled NLSE \cite{ } by letting $\{ p_1=q_1=0, s_1=u, v_1=-u, p_2=q_2=0, s_2=v, v_2=-v, \alpha=1, %\beta=1 \}$ when $m=4$ in reduced hierarchy (\ref{shier}), i.e., 
%\begin{equation}
%\begin{cases}
%u_t=\frac{1}{2}u_{xx}-u|u|, \\
%v_t=-\frac{1}{2}u_{xx}+\frac{1}{2}v_{xx}+u^2v.
%\end{cases}
%\end{equation}

\section{Hamiltonian structures}
\subsection{Constructing bilinear forms over a non-semisimple Lie algebra}
\label{bf}
For generating Hamiltonian structures for the integrable couplings in (\ref{shier1}) and (\ref{shier2}), we use the variational identity over the non-semisimple Lie algebra $  \tilde{\mathfrak{g}}(\lambda)$ \cite{ic12,ic16,ic17}. The variational identity is a generalization of the trace identity \cite{ic12, 31}. The variational identity may be applied to non-semisimple Lie algebras where the trace identity works for semsimple. The idea is to use non-degenerate, symmetric, and ad-invariant bilinear forms on the Lie algebra. We begin with a  construction of non-degenerate, symmetric, and ad-invariant bilinear forms on $ \tilde{\mathfrak{g}}(\lambda)$ by rewriting $ \tilde{\mathfrak{g}}(\lambda)$ into a vector form. 

The isomorphism
\begin{equation}
\sigma :  \tilde{\mathfrak{g}}(\lambda) \to \mathbb{R}^6, A \mapsto (a_1, ... , a_6)^T ,
\end{equation}
where
\begin{equation}
A=M(A_1,A_2) \in \tilde{\mathfrak{g}}(\lambda), \quad A_i = \left[ \begin{matrix} a_{3i-2} & a_{3i-1}\\
a_{3i} & -a_{3i-2}  \end{matrix}\right], i= 1,2,
\end{equation}
and a constant symmetric matrix,
\begin{equation}
F=F_1 \otimes \begin{bmatrix} 1 & 0\\ 0 & 0 \end{bmatrix} + F_2 \otimes \begin{bmatrix} 0 & 1\\ 1 & 0 \end{bmatrix} ,  F_i=\begin{bmatrix} 2 \eta_i & 0 & 0  \\
0 & 0 & \eta_i  \\
0 & \eta_1i& 0  \end{bmatrix}, i=1,2,
\end{equation}
where $\otimes$ is the Kronecker prodect, with arbitrary constants $\eta_1$ and $\eta_2$ furnish the bilinear forms on $ \tilde{\mathfrak{g}}(\lambda)$ defined as
\begin{equation}
\label{bf}
\begin{split}
\langle A,B \rangle_{\tilde{\mathfrak{g}}(\lambda)} =& \langle \sigma(A), \sigma(B) \rangle_{\mathbb{R}^6}\\
=&(a_1, ..., a_6)F(b_1, ..., b_6)^T\\
=&(2a_1b_1+a_2b_3+a_3b_2)\eta_1 + (2a_1b_4+a_2b_6 + a_3b_5 \\
&+ 2a_4b_1 + a_5b_3 + a_6b_2)\eta_2.
\end{split}
\end{equation}
The bilinear forms (\ref{bf}) are symmetric and ad-invariant due to the isomorphism $\sigma$. The bilinear forms, defined by (\ref{bf}), are non-degenerate iff the determinant of F is not zero, i.e.,
\begin{equation}
det(F)=-4 \eta_2^6 \neq 0.
\end{equation}
Therefore, we choose $\eta_2 \neq 0$ to obtain the required non-degenerate, symmetric, and ad-invariant bilinear forms over the enlarged matrix loop algebra $\tilde{\mathfrak{g}}(\lambda)$. For simplicity, we choose $\eta_1=0$ and $\eta_2=1$.

\subsection{Hamiltonian structures of generalized D-KN integrable couplings}
\label{adkn}

Now, we begin with the enlarged spectral matrix of a generalized D-KN hierarchy (\ref{spic10}) and compute 
\begin{equation}
\langle \bar{W},\bar{U}_{\lambda} \rangle_{\tilde{\mathfrak{g}}(\lambda)}= (4a+4e)\lambda+fq_1+bq_2+cp_2+gp_1
\end{equation}
and
\begin{equation}
\langle \bar{W},\bar{U}_{\bar{u}} \rangle_{\tilde{\mathfrak{g}}(\lambda)}=[
g \lambda,
 f \lambda,
-2e,
g,
f ,
c \lambda ,
 b \lambda ,
-2a ,
c,
b ]^T.
\end{equation}
Substituting the Laurent series and comparing powers of $\lambda$, we have
\begin{equation}
\begin{split}
\frac{\delta}{\delta \bar{u}} \int \frac{(4a_{m+2}+4e_{m+2})+f_{m+1}q_1+b_{m+1}q_2+c_{m+1}p_2+g_{m+1}p_1}{m}dx= \\
[
g_{m+1},
 f_{m+1},
-2e_m,
g_m,
f_m ,
c_{m+1} ,
 b_{m+1} ,
-2a_m ,
c_m,
b_m ]^T, \enspace m \geq 1.
\end{split}
\end{equation}

A long calculation involving the recursion relations (\ref{rr4}) shows that 
\begin{equation}
\label{recstrut2}
\frac{\delta \mathcal{\bar{H}}_{m+1}}{\delta \bar{u}}=\bar{\Psi} \frac{\delta \mathcal{\bar{H}}_m}{\delta \bar{u}},
\end{equation}
where
\begin{equation}
\bar{\Psi}=\bar{\Phi}^{\dagger}=\begin{bmatrix} \Phi^\dagger & (\Phi_1-\Phi)^\dagger \\ 0 & \Phi^\dagger \end{bmatrix},
\end{equation}
with $\Phi$ and $\Phi_1$ from (\ref{phi1}), (\ref{phi10a}) and (\ref{phi10b}), respectively.
We consequently obtain Hamiltonian structures for the hierarchy  of integrable couplings (\ref{shier1}), i.e.,
\begin{equation}
\label{hh}
\bar{u}_{t_m}=\bar{J} \frac{\delta \bar{\mathcal{H}}_m}{\delta \bar{u}}, \enspace m \geq 0,
\end{equation}
with the Hamiltonian functionals,
\begin{equation}
\bar{\mathcal{H}}_m= \int  \frac{(4a_{m+2}+4e_{m+2})+f_{m+1}q_1+b_{m+1}q_2+c_{m+1}p_2+g_{m+1}p_1}{m}dx, 
\end{equation}
for $m \geq 1$, and
\begin{equation}
\bar{\mathcal{H}}_0= \int [(\beta-\alpha)p_1q_1+\alpha(p_1q_2+p_2q_1)-2\beta r_1 - 2 \alpha r_2]dx
\end{equation}
calculated directly from $[g_1, f_1, -2e_0, g_0, f_0, c_1, b_1, -2a_0, c_0, b_0]^T$. The Hamiltonian operator in (\ref{hh}) is the block matrix of the form:
\begin{equation}
\bar{J}=\begin{bmatrix} 0 & J_1 \\ J_1 & J_2 \end{bmatrix}, 
\end{equation}
where
\begin{equation}
J_1=\begin{bmatrix} 0 & 2 & 0 & 0 & 0\\ -2 & 0 & 0 & 0 & 0\\ 0 & 0 & \frac{1}{2} \partial & s_1 & -v_1 \\
0 & 0 & -s_1 & 0 & \partial+2r_1 \\  0 & 0 & v_1 & \partial-2r_1  & 0 \end{bmatrix}, J_2=\begin{bmatrix} 0 & 2 & 0 & 0 & 0 \\ -2 & 0 & 0 & 0 & 0 \\ 0 & 0 & 0 & s_2 & -v_2 \\ 0 & 0 & -s_2 & 0 & 2r_2 \\
 0 & 0 & v_2 & -2r_2  & 0 \end{bmatrix}. 
\end{equation}
As a direct result of the Hamiltonian structures (\ref{hh}), the recursion structure (\ref{recurr}) and (\ref{recstrut2}), and the property $\bar{J}\bar{\Psi}=\bar{\Psi}^\dagger \bar{J}$, the hierarchy (\ref{shier1}) has the following commutativity of flows:
\begin{equation}
\{ \bar{\mathcal{H}}_k,\bar{\mathcal{H}}_l \}_{\bar{J}}= \int \left ( \dfrac{\delta \bar{\mathcal{H}}_k}{\delta \bar{u}} \right)^T \bar{J} \dfrac{\delta \bar{\mathcal{H}}_l }{\delta \bar{u}} dx = 0.
\end{equation}
We also have the commutativity of symmetries for $\{ \bar{K}_n \}$, i.e.,
\begin{equation}
[\bar{K}_k,\bar{K}_l]=\bar{K}_k'(\bar{u})[\bar{K}_l]-\bar{K}_l'(\bar{u})[\bar{K}_k]=0, \quad k,l \geq 0.
\end{equation}
Therefore, the hierarchy (\ref{shier1}) is Liouville integrable, as expected.

\subsection{Bi-Hamiltonian structures of the reduced integrable couplings}
\label{ardkn}

Next, we focus on the reduced spectral matrix (\ref{spic20}) and compute 
\begin{equation}
\langle \bar{W},\bar{U}_{\lambda} \rangle_{\tilde{\mathfrak{g}}(\lambda)}= (4a+4e)\lambda+fq_1+bq_2+cp_2+gp_1
\end{equation}
and 
\begin{equation}
\begin{split}
\langle \bar{W},\bar{U}_{\bar{u}} \rangle_{\tilde{\mathfrak{g}}(\lambda)}= & [
(a-e)q_1-aq_2+g \lambda,
(a-e)p_1-ap_2+ f \lambda,
g,
f , \\
&-aq_1+c \lambda ,
-ap_1+ b \lambda ,
c,
b ]^T.
\end{split}
\end{equation}
Again, we compare powers of $\lambda$ after substituting the Laurent series for $a, b, c, e, f, g$ to get
\begin{equation}
\begin{split}
\frac{\delta}{\delta \bar{u}} \int \frac{(4a_{m+2}+4e_{m+2})+f_{m+1}q_1+b_{m+1}q_2+c_{m+1}p_2+g_{m+1}p_1}{m}dx=\\
[(a_m-e_m)q_1-a_mq_2+g_{m+1},
(a_m-e_m)p_1-a_mp_2+ f_{m+1},
g_m,
f_m \\
-a_mq_1+c_{m+1},
-a_mp_1+ b_{m+1},
c_m, 
b_m ]^T, \enspace m \geq 1.
\end{split}
\end{equation}
Now using the recursion relations (\ref{rr8}), we have 
\begin{equation}
\label{recstrut4}
\frac{\delta \mathcal{\bar{H}}_{m+1}}{\delta \bar{u}}=\bar{\Psi}\frac{\delta \mathcal{\bar{H}}_{m}}{\delta \bar{u}},
\end{equation}
where
\begin{equation}
\bar{\Psi}=\bar{\Phi}^{\dagger}=\begin{bmatrix} \Phi^\dagger & (\Phi_1-\Phi)^\dagger \\ 0 & \Phi^\dagger \end{bmatrix},
\end{equation}
with $\Phi$ and $\Phi_1$ from (\ref{phi2}) and (\ref{phi20}), respectively.

We finally obtain the bi-Hamiltonian structure for the hierarchy of integrable couplings (\ref{shier2}),
\begin{equation}
\label{biham}
\bar{u}_{t_m}=\bar{J} \frac{\delta \bar{\mathcal{H}}_{m+1}}{\delta \bar{u}}=\bar{M} \frac{\delta \bar{\mathcal{H}}_{m}}{\delta \bar{u}}, \enspace m \geq 0,
\end{equation}
with the Hamiltonian functionals
\begin{equation}
\bar{\mathcal{H}}_m= \int  \frac{(4a_{m+2}+4e_{m+2})+f_{m+1}q_1+b_{m+1}q_2+c_{m+1}p_2+g_{m+1}p_1}{m}dx, 
\end{equation}
for $m \geq 1$, and the block Hamiltonian operators $\bar{J}$ given by
\begin{equation}
\label{barj}
\bar{J}=\begin{bmatrix} 0 & J \\ J & J \end{bmatrix}, J=\begin{bmatrix} 0 & 0 & 0 & 2\\ 0 & 0 & -2 & 0 \\ 0 & 2 & 0 &0 \\ -2 & 0 & 0 & 0\end{bmatrix},
\end{equation}
and $\bar{M}=\bar{\Phi}\bar{J}$ where $\bar{J}$ is above (\ref{barj}) and $\bar{\Phi}$ is the recursion operator (\ref{ro}) for the reduced integrable couplings (\ref{shier2}). Recall, a bi-Hamiltonian property means that $\bar{J}$ and $\bar{M}$ constitute a Hamiltonian pair, or, $\bar{N}=\alpha \bar{J} + \beta \bar{M}$, for any $\alpha, \beta \in \mathbb{R}$, is a Hamiltonian operator. As a direct result of the bi-Hamiltonian structure (\ref{biham}), we can say that the soliton hierarchy (\ref{shier2}) is integrable in the Liouville sense:
\begin{equation}
\begin{cases}
\{ \bar{\mathcal{H}}_k,\bar{\mathcal{H}}_l \}_{\bar{J}}=& \int \left ( \dfrac{\delta \bar{\mathcal{H}}_k}{\delta \bar{u}} \right)^T \bar{J} \dfrac{\delta \bar{\mathcal{H}}_l }{\delta \bar{u}} dx = 0,  \\

\{ \bar{\mathcal{H}}_k,\bar{\mathcal{H}}_l \}_{\bar{M}}=& \int \left ( \dfrac{\delta \bar{\mathcal{H}}_k}{\delta \bar{u}} \right)^T \bar{M} \dfrac{\delta \bar{\mathcal{H}}_l }{\delta \bar{u}} dx = 0,
\end{cases}
\end{equation}
and
\begin{equation}
[\bar{K}_k,\bar{K}_l]=\bar{K}_k'(\bar{u})[\bar{K}_l]-\bar{K}_l'(\bar{u})[\bar{K}_k]=0, \quad k,l \geq 0.
\end{equation}

 \section{Concluding remarks}

We have introduced a new spectral matrix that is a generalization of the D-Kaup-Newell and Kaup-Newell spectral problems. Integrable couplings were a result  of solving zero-curvature equations on the enlarged zero-curvature equation and its reduction. The Hamiltonian structures of the integrable couplings were constructed and presented. The reduced hierarchy of integrable couplings was found to be bi-Hamiltonian and both hierarchies are Liouville integrable. These hierarchies are both new and different. Many new systems of equations have been found to be integrable like the coupled mKdV \ref{cmkdv}.

This paper uses a relatively new idea of having $\frac{\partial U_1}{\partial \lambda} \neq 0$ in the enlarged spectral matrix $\bar{U}$ where most integrable couplings are generated through perturbations and $\frac{\partial U_1}{\partial \lambda} = 0$. Although the calculations are more difficult and tedious, many new applications may arise from integrable couplings starting from enlarged spectral matrices of this form. One application is the Darboux transformation method for the construction of solutions to integrable couplings \cite{darboux, dissertation}. This construction creates new integrable systems associated with non-semisimple Lie algebras and brings us new insightful thoughts to classify integrable systems from an algebraic point of view.

\section*{References}

\bibliographystyle{iopart-num}

%\bibliography{...}
\end{document}